\title{Shepherding Hordes of Markov Chains\thanks{This work has been supported by
the DFG RTG 2236 ``UnRAVeL'' and the Czech Science Foundation grant No. Robust 17-12465S}}
\author{Milan \v{C}e\v{s}ka\inst{1} \and Nils Jansen\inst{2} \and Sebastian Junges\inst{3} \and Joost-Pieter Katoen\inst{3}
}
\institute{
Brno University of Technology, Brno, Czech Republic \and 
Radboud University, Nijmegen, The Netherlands \and
RWTH Aachen University, Aachen, Germany
}
\tikzset{>=latex}
\newcommand{\init}{\ensuremath{s_0}}
\newtheorem{mydef}{Definition}
\newtheorem{mytheorem}{Theorem}
\newtheorem{myproblem}{Problem}
\newtheorem{mycorollary}{Corollary}
\newtheorem{mymethod}{Approach}
\newcommand{\pathset}{\mathsf{Paths}}
\newcommand{\pathsfin}{\pathset_{\mathit{fin}}}
\newcommand{\act}{\ensuremath{a}}
\newcommand{\Act}{\ensuremath{\mathit{Act}}}
\newcommand{\last}[1]{\mathrm{last}(#1)}
\DeclareMathOperator{\supp}{supp}
\DeclareMathOperator{\successors}{succ}
\newcommand{\Distr}{\mathit{Distr}}
\newcommand{\distDom}{X}
\newcommand{\distFunc}{\mu}
\newcommand{\distDomElem}{x}
\newcommand{\true}{\texttt{true}\xspace}
\newcommand{\probOneG}{\mathit{p1G}}
\newcommand{\probPosG}{\mathit{pPG}}
\newcommand{\quotientstates}{\ensuremath{S^\mathfrak{D}_\sim}}
\newcommand{\dtmc}{\ensuremath{D}}
\newcommand{\mdp}{\ensuremath{M}}
\begin{document}
\maketitle

\begin{abstract}

This paper considers large families of Markov chains (MCs) that are
defined over a set of parameters with finite discrete domains. Such
families occur in software product lines, planning under partial
observability, and sketching of probabilistic programs. Simple questions,
like `does at least one family member satisfy a property?', are NP-hard.
We tackle two problems: distinguish family members that satisfy a given quantitative property from those that do not, and determine
a family member that satisfies the property optimally, i.e., with the highest probability or reward.
We show that combining two well-known techniques, MDP model checking and abstraction refinement, mitigates the
computational complexity. Experiments on a broad set of benchmarks show
that in many situations, our approach is able to handle families of millions of MCs,
providing superior scalability compared to existing~solutions.

\end{abstract}

\section{Introduction}

%
Randomisation is key to research fields such as dependability (uncertain system components), distributed computing (symmetry breaking), planning (unpredictable environments), and probabilistic programming.
Families of alternative designs differing in the structure and system parameters are ubiquitous.
Software dependability has to cope with configuration options, in distributed computing the available memory per process is highly relevant, in planning the observability of the environment is pivotal, and program synthesis is all about selecting correct program variants.
The automated analysis of such families has to face a formidable challenge --- in addition to the state-space explosion affecting each family member, the family size typically grows exponentially in the number of features, options, or observations.
This affects the analysis of (quantitative) software product lines~\cite{GS13,DBLP:conf/splc/VarshosazK13,RodriguesANLCSS15,DBLP:conf/fm/VandinBLL18,DBLP:journals/fac/ChrszonDKB18}, strategy synthesis in planning under partial
observability~\cite{Koc2015,DBLP:conf/aaai/ChatterjeeCD16,DBLP:conf/aaai/ChadesCMNSB12,DBLP:journals/rts/Norman0Z17,DBLP:journals/tcs/GiroDF14}, and probabilistic program synthesis~\cite{DBLP:conf/pldi/NoriORV15,DBLP:conf/cav/ChasinsP17,DBLP:journals/ase/GerasimouCT18,CALINESCU2018140}. 

%
This paper considers families of Markov chains (MCs) to describe configurable probabilistic systems. 
We consider finite MC families with finite-state family members.
Family members may have different transition probabilities and distinct topologies --- thus different reachable state spaces.
The latter aspect goes beyond the class of parametric MCs as considered in parameter synthesis~\cite{dehnert2015prophesy,hahn2011probabilistic,Ceska2017,DBLP:conf/atva/CubuktepeJJKT18} and model repair~\cite{DBLP:conf/tase/ChenHHKQ013,bartocci2011model,pathak-et-al-nfm-2015}. 

For an MC family $\mathfrak{D}$ and quantitative specification $\varphi$, with $\varphi$ a reachability probability or expected reward objective, we consider the following synthesis problems: (a) does some member in $\mathfrak{D}$ satisfy a threshold on~$\varphi$? (aka: \emph{feasibility} synthesis), (b) which members of $\mathfrak{D}$ satisfy this threshold on $\varphi$ and which ones do not? (aka: \emph{threshold synthesis}), and (c) which family member(s) satisfy $\varphi$ optimally, e.g., with highest probability? (aka: \emph{optimal synthesis}).

%
The simplest synthesis problem, feasibility, is NP-complete and can naively be solved by analysing all individual family members --- the so-called \emph{one-by-one} approach. 
This approach has been used in \cite{DBLP:journals/fac/ChrszonDKB18} (and for qualitative systems in e.g.\ \cite{DBLP:journals/sttt/ClassenCHLS12}), but is infeasible for large systems.
An alternative is to model the family $\mathfrak{D}$ by a single Markov decision process (MDP) --- the so-called \emph{all-in-one} MDP~\cite{DBLP:journals/fac/ChrszonDKB18}. 
The initial MDP state non-deterministically chooses a family member of $\mathfrak{D}$, and then evolves in the MC of that member.
This approach has been implemented in tools such as ProFeat~\cite{DBLP:journals/fac/ChrszonDKB18}, and for purely qualitative systems in~\cite{DBLP:journals/scp/ClassenCHLS14}.
The MDP representation avoids the individual analysis of all family members, but its size is proportional to the family size. 
This approach therefore does not scale to large families.
A symbolic BDD-based approach is only a partial solution as family members may induce different reachable state-sets.

This paper introduces an \emph{abstraction-refinement} scheme over the MDP representation\footnote{Classical CEGAR for model checking of software product lines has been proposed in~\cite{DBLP:conf/sigsoft/CordyHLSDL14}. This uses feature transition systems, is purely qualitative, and exploits existential state abstraction.}.
The abstraction \emph{forgets} in which family member the MDP operates.
The resulting \emph{quotient} MDP has a single representative for every reachable state in a family member. 
It typically provides a very compact representation of the family $\mathfrak{D}$ and its analysis using off-the-shelf MDP model-checking algorithms yields a speed-up compared to the all-in-one approach. 
Verifying the quotient MDP yields under- and over-approximations of the $\min$ and $\max$ probability (or reward), respectively. 
These bounds are safe as all \emph{consistent} schedulers, i.e., those that pick actions according to a single family member, are contained in all schedulers considered on the quotient MDP.
%
(CEGAR-based MDP model checking for partial information schedulers, a slightly different notion than restricting schedulers to consistent ones, has been considered in~\cite{DBLP:conf/atva/GiroR12}. In contrast to our setting, \cite{DBLP:conf/atva/GiroR12} considers history-dependent schedulers and in this general setting no guarantee can be given that bounds on suprema converge~\cite{DBLP:journals/tcs/GiroDF14}).

Model-checking results of the quotient MDP do provide useful insights. 
This is evident if the resulting scheduler is consistent.
If the verification reveals that the $\min$ probability exceeds $r$ for a specification $\varphi$ with a $\leq r$ threshold, then --- even for inconsistent schedulers --- it holds that all family members violate~$\varphi$. 
If the model checking is inconclusive, i.e., the abstraction is too coarse, we iteratively refine the quotient MDP by splitting the family into sub-families. 
We do so in an efficient manner that avoids rebuilding the sub-families.
Refinement employs a light-weight analysis of the model-checking results. 

We implemented our abstraction-refinement approach using the Storm model checker~\cite{DBLP:conf/cav/DehnertJK017}.
Experiments with case studies from software product lines, planning, and distributed computing yield possible speed-ups of up to 3 orders of magnitude over the one-by-one and all-in-one approaches (both symbolic and explicit). 
Some benchmarks include families of millions of MCs where family members are thousands of states.
The experiments reveal that --- as opposed to parameter synthesis~\cite{dehnert2015prophesy,hahn2011probabilistic,Ceska2017} --- the threshold has a major influence on the synthesis times.

To summarise, this work presents:
a)~MDP-based abstraction-refinement for various synthesis problems over large families of MCs,
b)~a refinement strategy that mitigates the overhead of analysing sub-families, 
and c)~experiments showing substantial speed-ups for many benchmarks.
Extra material can be found in~\cite{additionalmaterial}.

\section{Preliminaries}
We present the basic foundations for this paper, for details, we refer to~\cite{BK08,DBLP:reference/mc/BaierAFK18}.

\paragraph{Probabilistic models.}
A \emph{probability distribution} over a finite or countably infinite set $\distDom$
is a function $\distFunc\colon\distDom\rightarrow [0,1]$ with $\sum_{\distDomElem\in\distDom}\distFunc(\distDomElem)=\distFunc(\distDom)=1$.
The set of all distributions on $\distDom$ is denoted $\Distr(\distDom)$. 
The support of a distribution $\distFunc$ is $\supp(\distFunc) = \{x\in\distDom\,|\,\distFunc(x)>0\}$.
A distribution is \emph{Dirac} if $|\!\supp(\distFunc)| = 1$.

\begin{mydef}[MC]\label{def:dtmc}
A \emph{discrete-time Markov chain} (MC) $D$ is a triple $(S,\init, \mathbf{P})$, where $S$ is a finite set of states, $\init \in S$ is an initial state, and
$\mathbf{P}\colon S \rightarrow \Distr(S)$ is a transition probability matrix.
\end{mydef}
MCs have unique distributions over successor states at each state. 
Adding nondeterministic choices over distributions leads to Markov decision processes.
\begin{mydef}[MDP]\label{def:mdp}	
A \emph{Markov decision process} (MDP) is a tuple $M=(S,\init,\Act,\mathcal{P})$ where $S, \init$ as in Def.~\ref{def:dtmc}, $\Act$ is a finite set of actions, and 
$\mathcal{P}\colon S\times \Act \nrightarrow \Distr(S)$ is a partial transition probability function.
\end{mydef}

The \emph{available actions} in $s\in S$ are $\Act(s)=\{\act\in\Act\mid \mathcal{P}(s,\act) \neq \bot \}$. An MDP with $|\Act(s)|=1$ for all $s\in S$ is an MC. 
For MCs (and MDPs), a state-reward function is $\textit{rew}\colon S\rightarrow \mathbb{R}_{\geq 0}$.
The reward $\textit{rew}(s)$ is earned upon leaving~$s$.

A \emph{path} of an MDP $M$ is an (in)finite sequence $\pi = s_0\xrightarrow{\act_0}s_1\xrightarrow{\act_1}\cdots$,
where $s_i\in S$, $\act_i\in\Act(s_i)$, and $\mathcal{P}(s_i,\act_i)(s_{i+1})\neq 0$ for all $i\in\mathbb{N}$.
For finite $\pi$, $\last{\pi}$ denotes
the last state of $\pi$. The set of (in)finite paths of $M$ is $\pathsfin^{M}$ ($\pathset^{M}$).
The notions of paths carry over to MCs (actions are omitted).
Schedulers resolve all choices of actions in an MDP and yield MCs.
\begin{mydef}[Scheduler]	\label{def:scheduler}
A \emph{scheduler} for an MDP $\mdp=(S,\init,\mathit{Act},\mathcal{P})$ is a function $\sigma\colon\pathsfin^{M}\to\Act$
  such that $\sigma(\pi)\in \Act(\last{\pi})$ for all $\pi\in \pathsfin^{\mdp}$.
   Scheduler $\sigma$ is \emph{memoryless} if $\last{\pi}=\last{\pi'} \implies \sigma(\pi)=\sigma(\pi')$ for all $\pi,\pi'\in\pathsfin^{\mdp}$.
  The set of all schedulers of $\mdp$ is $\Sigma^{\mdp}$.
\end{mydef}
\begin{mydef}[Induced Markov Chain]
  \label{def:induced_dtmc}
The MC \emph{induced by MDP $\mdp$ and $\sigma \in \Sigma^\mdp$} is given by $\mdp_\sigma = (\pathsfin^{\mdp},\init,\mathbf{P}^{\sigma})$ where:
  \[
    \mathbf{P}^{\sigma}(\pi,\pi') = \begin{cases}
        \mathcal{P}(\last{\pi},\sigma(\pi))(s') & \text{if }\pi' = \pi\xrightarrow{\sigma(\pi)} s' \\
        0 & \text{otherwise.}
      \end{cases}
  \]
\end{mydef}
\paragraph{Specifications.} For a MC $\dtmc$, we consider unbounded reachability specifications of the form $\varphi=\mathbb{P}_{\sim \lambda}(\lozenge G)$ with $G\subseteq S$ a set of goal states, $\lambda\in [0,1]\subseteq{\mathbb{R}}$, and ${{}\sim{}} \in \{<,\leq,\geq,>\}$.
The probability to satisfy the path formula $\phi=\lozenge G$ in $\dtmc$ is denoted by $\texttt{Prob}(\dtmc, \phi)$.
If $\varphi$ holds for $\dtmc$, that is, $\texttt{Prob}(\dtmc, \phi)\sim \lambda$, we write $ \dtmc\models\varphi$.
Analogously, we define expected reward specifications of the form $\varphi=\mathbb{E}_{\sim \kappa}(\lozenge G)$ with $\kappa\in {\mathbb{R}_{\geq 0}}$. We refer to $\lambda$/$\kappa$ as \emph{thresholds}.
While we only introduce reachability specifications, our approaches may be extended to richer logics like arbitrary PCTL~\cite{hansson_pctl}, PCTL*~\cite{aziz1995usually}, or $\omega$-regular properties.

For an MDP $\mdp$, a specification $\varphi$ holds ($M\models\varphi$) if and only if it holds for the induced MCs of all schedulers.
The maximum probability $\texttt{Prob}^{\max}(\mdp,\phi)$ to satisfy a path formula $\phi$ for an MDP $M$ is given by a maximising scheduler $\sigma^{\max}\in\Sigma^M$, that is, there is no scheduler $\sigma'\in\Sigma^M$ such that $\texttt{Prob}(\mdp_{\sigma^{\max}},\phi)<\texttt{Prob}(M_{\sigma'},\phi)$.
Analogously, we define the minimising probability $\texttt{Prob}^{\min}(\mdp,\phi)$, and the maximising (minimising) expected reward $\texttt{ExpRew}^{\max}(\mdp, \phi)$ ($\texttt{ExpRew}^{\min}(\mdp, \phi)$).

The probability (expected reward) to satisfy path formula $\phi$ from state $s\in S$ in MC $\dtmc$ is $\texttt{Prob}(\dtmc,\phi)(s)$ ($\texttt{ExpRew}(\dtmc,\phi)(s)$).
The notation is analogous for maximising and minimising probability and expected reward measures in MDPs.
%
Note that the expected reward $\texttt{ExpRew}(\dtmc, \phi)$ to satisfy path formula $\phi$ is only defined if $\texttt{Prob}(\dtmc, \phi)=1$.
Accordingly, the expected reward for MDP $\mdp$ under scheduler $\sigma\in\Sigma^\mdp$ requires $\texttt{Prob}(\mdp_{\sigma}, \phi)=1$.	

\section{Families of MCs}
We present our approaches on the basis of an explicit representation of a \emph{family of MCs} using a parametric transition probability function.
While arbitrary probabilistic programs allow for more modelling freedom and complex parameter structures, the explicit representation alleviates the presentation and allows to reason about practically interesting synthesis problems. 
In our implementation, we use a more flexible high-level modelling language, cf. Sect~\ref{sec:implementation}.
%
\begin{mydef}[Family of MCs]
		A \emph{family of MCs} is defined as a tuple $\mathfrak{D}=(S,\init,K,\mathfrak{P})$ where $S$ is a finite set of states, $\init\in S$ is an initial state, $K$ is a finite set of discrete parameters such that the domain of each parameter $k\in K$ is $T_k\subseteq S$, and $\mathfrak{P}\colon S \rightarrow \Distr(K)$ is a family of transition probability matrices.
\end{mydef}
The transition probability function of MCs maps states to distributions over successor states. 
For families of MCs, this function maps states to distributions over parameters.
Instantiating each of these parameters with a value from its domain yields a ``concrete'' MC, called a \emph{realisation}.
%
%
\begin{mydef}[Realisation]
	A \emph{realisation} of a family $\mathfrak{D}=(S,\init,K,\mathfrak{P})$ is a function $r\colon K \rightarrow S$ where $\forall k\in K\colon r(k) \in T_k$. 
	A realisation $r$ yields a MC $D_r = (S,\init,\mathfrak{P}(r))$, where $\mathfrak{P}(r)$ is the transition probability matrix in which 
	each  
	$k\in K$ in $\mathfrak{P}$ 
	is replaced by $r(k)$.
	Let $\mathcal{R}^\mathfrak{D}$ denote the \emph{set of all realisations} for $\mathfrak{D}$.
\end{mydef}
As a family $\mathfrak{D}$ of MCs is defined over finite parameter domains, the number of family members (i.e. realisations from $\mathcal{R}^\mathfrak{D}$) of $\mathfrak{D}$ is finite, viz. $|\mathfrak{D}| \colonequals |\mathcal{R}^\mathfrak{D}|  = \prod_{k\in K}|T_k|$, but exponential in $|K|$.
Subsets of $\mathcal{R}^\mathfrak{D}$ induce so-called \emph{subfamilies} of $\mathfrak{D}$.
While all these MCs share the same state space, their \emph{reachable} states may differ, as demonstrated by the following example.

\begin{example}[Family of MCs]\label{ex:dtmc_family}
	Consider a family of MCs $\mathfrak{D} = (S,\init,K,\mathfrak{P})$ where $S=\{0,1,2,3\}$, $\init = 0$,  and $K=\{k_0,k_1,k_2\}$ with domains  $T_{k_0}=\{0\}, T_{k_1}=\{0,1\}$, and $T_{k_2}=\{2,3\}$. The parametric transition function $\mathfrak{P}$ is defined by:
		\begin{align*}
		&\mathfrak{P}(0) = 0.5\colon k_0 + 0.5\colon k_1 &\mathfrak{P}(1) = 0.5\colon k_1 + 0.5\colon k_2\\
		&\mathfrak{P}(2) = 1\colon k_2 				   &\mathfrak{P}(3) = 0.5\colon k_1 + 0.5\colon k_2
	\end{align*}
Fig.~\ref{fig:realisation} shows the four MCs that result from the realisations $\{r_1,r_2,r_3,r_4\}=\mathcal{R}^\mathfrak{D}$ of $\mathfrak{D}$.
States that are unreachable from the initial state are greyed out.
\end{example}
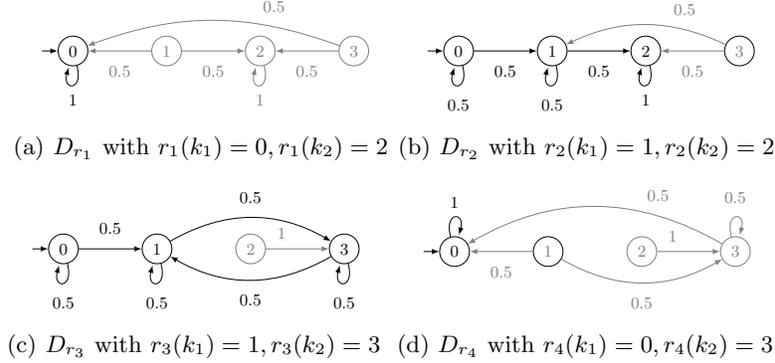
\begin{figure}[t]
	\centering
	\subfigure[$D_{r_1}$ with $r_1(k_1)=0, r_1(k_2)=2$]{%
	    \label{fig:realisation_0_2}
		\scalebox{0.7}{\begin{tikzpicture}[every node/.style={circle}]
	\node[draw] (0) {$0$} ;
	\node[draw=gray, right=1.2 cm of 0] (1) {\color{gray}$1$};
	\node[draw=gray, right=1.2 cm of 1] (2) {\color{gray}$2$};
	\node[draw=gray, right=1.2 cm of 2] (3) {\color{gray}$3$};

	\draw[->] (0) edge[loop below] node[auto] {$1$} (0);
	\draw[->,gray] (1) edge[] node[auto] {$0.5$} (0);
	\draw[->,gray] (1) edge[] node[below] {$0.5$} (2);
	\draw[->,gray] (2) edge[loop below] node[auto] {$1$} (2);
	\draw[->,gray] (3) edge[bend right=20] node[above, near start] {$0.5$} (0);
	\draw[->,gray] (3) edge[] node[below] {$0.5$} (2);
	\draw ($(0.west) + (-0.3,0)$) edge[->] (0);
	\draw  [use as bounding box, draw=white] (-1,-1.4) rectangle (6,1.3) {};
\end{tikzpicture}
	\subfigure[$D_{r_2}$ with $r_2(k_1)=1, r_2(k_2)=2$]{
	    \label{fig:realisation_1_2}
		\scalebox{0.7}{\begin{tikzpicture}[every node/.style={circle}]
	\node[draw] (0) {$0$};
	\node[draw, right=1.2 cm of 0] (1) {\color{black}$1$};
	\node[draw, right=1.2 cm of 1] (2) {\color{black}$2$};
	\node[draw, right=1.2 cm of 2] (3) {\color{gray}$3$};

	\draw[->] (0) edge[loop below] node[auto] {$0.5$} (0);
	\draw[->] (1) edge[loop below] node[auto] {$0.5$} (1);
	\draw[->] (2) edge[loop below] node[auto] {$1$} (2);
	\draw[->] (0) edge node[below] {$0.5$} (1);
	\draw[->] (1) edge node[below] {$0.5$} (2);
	\draw[->,gray] (3) edge[bend right=25] node[above, near start] {$0.5$} (1);
	\draw[->,gray] (3) edge[] node[below] {$0.5$} (2);
	\draw ($(0.west) + (-0.3,0)$) edge[->] (0);
	\draw  [use as bounding box, draw=white] (-1,-1.2) rectangle (6,1.3) {};
\end{tikzpicture}
	\subfigure[$D_{r_3}$ with $r_3(k_1)=1, r_3(k_2)=3$]{%
	    \label{fig:realisation_1_3}
		\scalebox{0.7}{\begin{tikzpicture}[every node/.style={circle}]
	\node[draw] (0) {$0$} ;
	\node[draw, right=1.2 cm of 0] (1) {\color{black}$1$};
	\node[draw=gray, right=1.2 cm of 1] (2) {\color{gray}$2$};
	\node[draw, right=1.2 cm of 2] (3) {$3$};
	
	\draw[->] (0) edge[loop below] node[auto] {$0.5$} (0);
	\draw[->] (1) edge[loop below] node[auto] {$0.5$} (1);
	\draw[->] (3) edge[loop below] node[auto] {$0.5$} (3);
	\draw[->] (0) edge node[auto] {$0.5$} (1);
	\draw[->] (1) edge[bend left] node[auto] {$0.5$} (3);
	\draw[->] (3) edge[bend left] node[auto] {$0.5$} (1);
	\draw[->,gray] (2) edge[] node[above, near start] {$1$} (3);

	\draw ($(0.west) + (-0.3,0)$) edge[->] (0);
	\draw  [use as bounding box, draw=white] (-1,-1.2) rectangle (6,1.3) {};
	\end{tikzpicture}}}
	\subfigure[$D_{r_4}$ with $r_4(k_1)=0, r_4(k_2)=3$]{%
	    \label{fig:realisation_0_3}
		\scalebox{0.7}{\begin{tikzpicture}[every node/.style={circle}]
	\node[draw] (0) {$0$} ;
	\node[draw, right=1.2 cm of 0] (1) {\color{gray}$1$};
	\node[draw=gray, right=1.2 cm of 1] (2) {\color{gray}$2$};
	\node[draw=gray, right=1.2 cm of 2] (3) {\color{gray}$3$};
	
	\draw[->] (0) edge[loop above] node[auto] {$1$} (0);
	\draw[->,gray] (1) edge[] node[auto] {$0.5$} (0);
	\draw[->,gray] (1) edge[bend right] node[below] {$0.5$} (3);
	\draw[->,gray] (2) edge[] node[above, near start] {$1$} (3);
	\draw[->,gray] (3) edge[loop above] node[auto] {$0.5$} (3);
	\draw[->,gray] (3) edge[bend right=30] node[above, near start] {$0.5$} (0);
		
	\draw ($(0.west) + (-0.3,0)$) edge[->] (0);
	\draw  [use as bounding box, draw=white] (-1,-1.2) rectangle (6,1.3) {};
	\end{tikzpicture}}}
		\vspace{-0.5em}
		\caption{The four different realisations of $\mathfrak{D}$.}
		
		\label{fig:realisation}
\end{figure}
We state two synthesis problems for families of MCs. 
The first is to identify the set of MCs satisfying and violating a given specification, respectively.
The second is to find a MC that maximises/minimises a given objective. 
We call these two problems \emph{threshold synthesis} and \emph{max/min synthesis}. 

\begin{myproblem}[Threshold synthesis]\label{prob:threshold} 
Let $\mathfrak{D}$ be a family of MCs and $\varphi$ a probabilistic reachability or expected reward specification.
The \emph{threshold synthesis problem} is to partition $\mathcal{R}^{\mathfrak{D}}$ into 
 $T$ and $F$ such that $\forall r \in T\colon D_r \vDash \varphi$ and $\forall r \in F\colon D_r \nvDash \varphi$.
\end{myproblem}
As a special case of the threshold synthesis problem, the \emph{feasibility synthesis problem} is to find just one realisation $r\in \mathcal{R}^{\mathfrak{D}}$ such that $D_r \vDash \varphi$.
\begin{myproblem}[Max synthesis]\label{prob:max} 
Let $\mathfrak{D}$ a family of MCs and $\phi=\lozenge G$ for $G\subseteq S$
The \emph{max synthesis problem} is to find a realisation  $r^* \in \mathcal{R}^{\mathfrak{D}}$ such that  $\texttt{\emph{Prob}}(D_{r^*}, \phi) = \max_{r\in \mathcal{R}_{\mathfrak{D}}} \{\texttt{\emph{Prob}}(D_{r}, \phi)\}$.
The problem is defined analogously for an expected reward measure or minimising realisations.
\end{myproblem}

\begin{example}[Synthesis problems]
	Recall the family of MCs $\mathfrak{D}$ from Example~\ref{ex:dtmc_family}.
	For the specification $\varphi=\mathbb{P}_{\geq 0.1}(\lozenge \{1\})$, the solution to the threshold synthesis problem is $T=\{r_2,r_3\}$ and $F=\{r_1,r_4\}$, as the goal state $1$ is not reachable for $D_{r_1}$ and $D_{r_4}$.
	For $\phi=\lozenge \{1\}$, the solution to the max synthesis problem on $\mathfrak{D}$ is $r_2$ or $r_3$, as $D_{r_2}$ and $D_{r_3}$ have probability one to reach state $1$.
\end{example}
\begin{mymethod}[One-by-one~\cite{DBLP:journals/fac/ChrszonDKB18}]\label{approach:one-by-one}
A straightforward solution to both synthesis problems is to enumerate all realisations $r\in\mathcal{R}^\mathfrak{D}$, model check the MCs $D_r$, and either compare all results with the given threshold or determine the maximum.
\end{mymethod}
We already saw that the number of realisations is exponential in $|K|$.
\begin{mytheorem}
		The feasibility synthesis problem is NP-complete.
\end{mytheorem}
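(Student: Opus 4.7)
The plan is to establish NP membership and NP-hardness separately. For NP membership, a realisation $r\in\mathcal{R}^\mathfrak{D}$ serves as a polynomial-size certificate, since $r\colon K\to S$ admits an encoding of size $O(|K|\log|S|)$. Given such an $r$, the induced MC $D_r$ is constructed in polynomial time by substituting $r(k)$ for each parameter $k$ in $\mathfrak{P}$, and deciding $D_r\models\varphi$ for either a reachability probability or an expected reward specification reduces to solving a linear system of size $|S|$ followed by a threshold comparison, both of which are in P.

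For NP-hardness, I would reduce from 3-SAT. Given a 3-CNF formula $\phi=\bigwedge_{j=1}^{m}C_j$ with $C_j=\ell_{j,1}\vee\ell_{j,2}\vee\ell_{j,3}$ over variables $x_1,\dots,x_n$, the construction produces a family $\mathfrak{D}_\phi$ together with a reachability specification $\varphi_\phi$ of the form $\mathbb{P}_{>0}(\lozenge\{g\})$ such that $\phi$ is satisfiable iff $\mathfrak{D}_\phi$ admits a feasible realisation. The backbone of $\mathfrak{D}_\phi$ is a chain of clause gadgets $G_1\to G_2\to\cdots\to G_m\to g$ together with a failure sink $\bot$. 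Each $G_j$ uniformly branches over the three literals of $C_j$, then queries the parameters associated with the literal, and routes to $G_{j+1}$ precisely when that literal is made true by the realisation; otherwise it routes to $\bot$. A feasible realisation for $\varphi_\phi$ then corresponds exactly to a Boolean assignment that satisfies $\phi$.

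The main obstacle is the design of the literal-test gadget within the family-of-MCs formalism, where every occurrence of a parameter $k$ forces the same successor state $r(k)$. A single parameter $v_i$ per variable with domain $\{T_i,F_i\}$ cannot, on its own, implement both positive and negative literals across different clauses, because the domain states have fixed outgoing transitions and cannot route context-dependently. My plan is therefore to introduce a fresh occurrence-parameter $\kappa_{j,a}$ per literal occurrence, whose two domain states can be wired freely into the corresponding branch of $G_j$, and to enforce that all occurrences of the same variable take coherent truth values by prepending coupling gadgets to the clause chain. A coupling gadget for two occurrence-parameters of the same variable is a small probabilistic structure that visits both via uniform branching and routes to $\bot$ with positive probability exactly when their realisations disagree, so that only realisations inducing a well-defined Boolean assignment can reach $g$. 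Getting the probability budget to cleanly separate coherent from incoherent realisations is the technically delicate step, but each gadget is of constant size and their number is $O(|\phi|^2)$, keeping the total reduction polynomial in $|\phi|$.
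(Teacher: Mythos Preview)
The paper does not give a self-contained argument; it simply asserts that hardness (even for almost-sure reachability) follows by adapting known reductions for augmented interval Markov chains, partial-information games, or POMDPs with positional strategies. Your NP-membership argument is unobjectionable and matches what any such adaptation would need.

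For NP-hardness, your high-level plan---reduce from 3-SAT via a clause chain under $\mathbb{P}_{>0}(\lozenge\{g\})$---is natural, and you correctly isolate the real obstacle: in this formalism a parameter $k$ always resolves to the \emph{same} successor state $r(k)$, so a single per-variable parameter cannot serve both polarities across different clause contexts. Your proposed remedy, one occurrence-parameter $\kappa_{j,a}$ per literal together with pairwise coupling gadgets, has a structural gap rather than merely a ``delicate probability budget''. By your own wiring, the two domain states of $\kappa_{j,a}$ already carry fixed outgoing transitions into the clause chain ($Y_{j,a}\to G_{j+1}$ and $N_{j,a}\to\bot$). Any coupling gadget that ``visits $\kappa_{j,a}$ via uniform branching'' therefore lands in exactly those states and is immediately routed into the clause chain or into $\bot$, not back into the coupling sequence. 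Concretely, a \emph{coherent} assignment that sets two coupled occurrences to their $N$-values sends the coupling gadget to $\bot$ with probability~$1$, wrongly rejecting a consistent realisation; and under $\mathbb{P}_{>0}$, sending merely \emph{positive} probability to $\bot$ in the incoherent case does not prevent $g$ from remaining reachable along the other branch. No choice of threshold repairs this, because the conflict is in the successor structure of the shared domain states, not in the probabilities.

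The reductions the paper cites sidestep this issue because in POMDPs and partial-information games the same observation can trigger \emph{state-dependent} transitions, and in augmented interval MCs Boolean choices are encoded as interval endpoints rather than successor states. Porting any of these to the present family formalism is exactly the step the paper leaves implicit and is less ``straightforward'' than the citation suggests; if you want a self-contained proof, you will need a different gadget design---for instance, giving each occurrence-parameter domain states whose outgoing transitions serve only the consistency phase, and recovering the clause semantics through a second layer of per-clause parameters, or abandoning $\mathbb{P}_{>0}$ in favour of a threshold that admits a genuine budget argument once the domain states are disentangled.
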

The theorem even holds for almost-sure reachability properties. 
The proof is a straightforward adaption of results for augmented interval Markov chains~\cite[Theorem 3]{DBLP:journals/corr/Chonev17}, partial information games~\cite{DBLP:conf/hybrid/ChatterjeeKS13}, or partially observable MDPs~\cite{DBLP:conf/aaai/ChatterjeeCD16}.

\section{Guided Abstraction-Refinement Scheme}
In the previous section, we introduced the notion of a family of MCs, two synthesis problems and the one-by-one approach.
Yet, for a sufficiently high number of realisations such a straightforward analysis is not feasible.
We propose a novel approach allowing us to more efficiently analyse families of MCs.

\subsection{All-in-one MDP}\label{sec:all-in-one}

We first consider a single MDP that subsumes all individual MCs of a family $\mathfrak{D}$, and is equipped with an appropriate action and state labelling to identify the underlying realisations from $\mathcal{R}^{\mathfrak{D}}$.
\begin{mydef}[All-in-one MDP~\cite{GS13,RodriguesANLCSS15,DBLP:journals/fac/ChrszonDKB18}]\label{def:all-in-one}
The \emph{all-in-one MDP} of a family $\mathfrak{D} = (S,\init,K,\mathfrak{P})$ of MCs is given as $M^\mathfrak{D}=(S^\mathfrak{D},\init^\mathfrak{D},\Act^\mathfrak{D},\mathcal{P}^\mathfrak{D})$ where $S^\mathfrak{D}=S \times \mathcal{R}^{\mathfrak{D}}
\cup\{\init^\mathfrak{D}\}$, $\Act^\mathfrak{D}=\{\act^r\mid r\in\mathcal{R}^{\mathfrak{D}}\}$, and $\mathcal{P}^\mathfrak{D}$ is defined as follows: 
$$\mathcal{P}^\mathfrak{D}(\init^\mathfrak{D},a^r)((s_0,r)) = 1 \quad \text{and} \quad \mathcal{P}^\mathfrak{D}((s,r),a^r)((s',r))  = \mathfrak{P}(r)(s)(s').$$
\end{mydef}

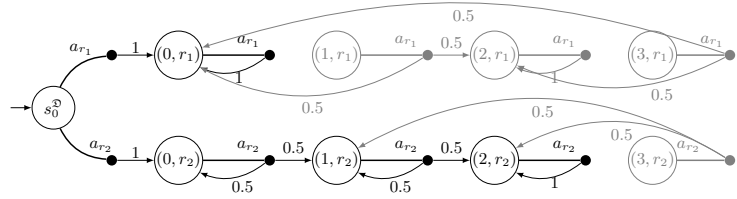
\begin{figure}[t]
	\centering
	\scalebox{0.7}{\begin{tikzpicture}[every node/.style={circle,inner sep=0.1pt}]
	\node[state] (s0) {$s_0^\mathfrak{D}$} ;
	\node[state,right=1.5 cm of s0,yshift=1cm] (0r1) {$(0,r_1)$};
	\node[state,right=1.5 cm of s0,yshift=-1cm] (0r2) {$(0,r_2)$};

	\node[state,right=4.5 cm of s0,yshift=1cm,draw=gray] (1r1) {\color{gray}$(1,r_1)$};
	\node[state,right=4.5 cm of s0,yshift=-1cm] (1r2) {$(1,r_2)$};
	
	\node[state,right=7.5 cm of s0,yshift=1cm,draw=gray] (2r1) {\color{gray}$(2,r_1)$};
	\node[state,right=7.5 cm of s0,yshift=-1cm] (2r2) {$(2,r_2)$};
	
	\node[state,right=10.5 cm of s0,yshift=1cm,draw=gray] (3r1) {\color{gray}$(3,r_1)$};
	\node[state,right=10.5 cm of s0,yshift=-1cm,draw=gray] (3r2) {\color{gray}$(3,r_2)$};

\iftrue
	\node[left=0.7 cm of 0r1,fill, circle, inner sep=2pt,] (s0ar1) {};
	\node[left=0.7 cm of 0r2,fill, circle, inner sep=2pt,] (s0ar2) {};
	
	\draw[-] (s0) edge[bend left, thick, near end] node[auto] {$\act_{r_1}$} (s0ar1);
	\draw[-] (s0) edge[bend right, thick, near end] node[auto] {$\act_{r_2}$} (s0ar2);

	\draw[->] (s0ar1) edge[] node[auto] {$1$} (0r1);
	\draw[->] (s0ar2) edge[] node[auto] {$1$} (0r2);
		
	\node[left=0.7 cm of 1r1,fill, circle, inner sep=2pt,] (0ar1) {};
	\node[left=0.7 cm of 1r2,fill, circle, inner sep=2pt,] (0ar2) {};
	
	\draw[-] (0r1) edge[thick, near end] node[auto] {$\act_{r_1}$} (0ar1);
	\draw[-] (0r2) edge[thick, near end] node[auto] {$\act_{r_2}$} (0ar2);

	\draw[->] (0ar1) edge[bend left] node[auto] {$1$} (0r1);
	\draw[->] (0ar2) edge[bend left] node[auto] {$0.5$} (0r2);
	\draw[->] (0ar2) edge[] node[auto] {$0.5$} (1r2);
	
	\node[left=0.7 cm of 2r1,fill=gray, circle, inner sep=2pt,] (1ar1) {};
	\node[left=0.7 cm of 2r2,fill, circle, inner sep=2pt,] (1ar2) {};
	
	\draw[-,gray] (1r1) edge[thick, near end] node[auto] {$\act_{r_1}$} (1ar1);
	\draw[-] (1r2) edge[thick, near end] node[auto] {$\act_{r_2}$} (1ar2);

	\draw[->,gray] (1ar1) edge[bend left=30] node[auto] {$0.5$} (0r1);
	\draw[->,gray] (1ar1) edge[] node[auto] {$0.5$} (2r1);
	\draw[->] (1ar2) edge[bend left] node[auto] {$0.5$} (1r2);
	\draw[->] (1ar2) edge[] node[auto] {$0.5$} (2r2);

	\node[left=0.7 cm of 3r1,fill=gray, circle, inner sep=2pt,] (2ar1) {};
	\node[left=0.7 cm of 3r2,fill, circle, inner sep=2pt,] (2ar2) {};
	
	\draw[-,gray] (2r1) edge[thick, near end] node[auto] {$\act_{r_1}$} (2ar1);
	\draw[-] (2r2) edge[thick, near end] node[auto] {$\act_{r_2}$} (2ar2);

	\draw[->,gray] (2ar1) edge[bend left] node[auto] {$1$} (2r1);
	\draw[->] (2ar2) edge[bend left] node[auto] {$1$} (2r2);

	\node[right=0.9 cm of 3r1,fill=gray, circle, inner sep=2pt,] (3ar1) {};
	\node[right=0.9 cm of 3r2,fill=gray, circle, inner sep=2pt,] (3ar2) {};
	
	\draw[-,gray] (3r1) edge[thick, near end] node[auto, near end] {$\act_{r_1}$} (3ar1);
	\draw[-,gray] (3r2) edge[thick, near start] node[auto] {$\act_{r_2}$} (3ar2);
%
	\draw[->,gray] (3ar1) edge[bend right=18] node[auto] {$0.5$} (0r1);
	\draw[->,gray] (3ar1) edge[bend left=25] node[below,pos=0.3] {$0.5$} (2r1);
	\draw[->,gray] (3ar2) edge[bend right=30] node[auto] {$0.5$} (1r2);
	\draw[->,gray] (3ar2) edge[bend right=30] node[auto] {$0.5$} (2r2);

	\draw ($(s0.west) - (0.4,0)$) edge[->] (s0);
\fi
	\end{tikzpicture}}\vspace{-0.3cm}	
	\caption{Reachable fragment of the all-in-one MDP $M^\mathfrak{D}$ for realisations $r_1$ and $r_2$.}
	\label{fig:all-in-one}
\end{figure}
\begin{example}[All-in-one MDP]\label{ex:all-in-one}
	Fig.~\ref{fig:all-in-one} shows the all-in-one MDP $M^\mathfrak{D}$ for the family $\mathfrak{D}$ of MCs from Example~\ref{ex:dtmc_family}. 
	Again, states that are not reachable from the initial state $\init^\mathfrak{D}$ are marked grey.
	For the sake of readability, we only include the transitions and states that correspond to realisations $r_1$ and $r_2$.
\end{example}
From the (fresh) initial state $\init^\mathfrak{D}$ of the MDP, the choice of an action $\act_r$ corresponds to choosing the realisation $r$ and entering the concrete MC $D_r$.
This property of the all-in-one MDP is formalised as follows.
\begin{mycorollary}
	For the all-in-one MDP $M^\mathfrak{D}$ of family $\mathfrak{D}$ of MCs:
	\[ \{ M^\mathfrak{D}_{\sigma^r} \mid \sigma^r \text{ memoryless deterministic scheduler } \} = \{ D_r \mid r \in \mathcal{R}^\mathfrak{D} \}\footnote{The original initial state $\init$ of the family of MCs needs to be the initial state of $M^\mathfrak{D}_{\sigma^r}$.}. \] 
\end{mycorollary}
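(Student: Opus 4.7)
The plan is to exploit the essentially degenerate action structure of $M^\mathfrak{D}$. By Definition~\ref{def:all-in-one}, for every state $(s,r) \in S \times \mathcal{R}^\mathfrak{D}$ the only action for which $\mathcal{P}^\mathfrak{D}((s,r),\cdot)$ is defined is $a^r$, because the realisation component of the state never changes along transitions. Hence the only genuine scheduler choice occurs at the fresh initial state $s_0^\mathfrak{D}$, where selecting $a^r$ fixes the realisation once and for all. I would then establish the equality as two inclusions.

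For $(\supseteq)$, given $r \in \mathcal{R}^\mathfrak{D}$, I define the memoryless deterministic scheduler $\sigma^r$ by $\sigma^r(\pi) = a^r$ whenever $\mathrm{last}(\pi) = s_0^\mathfrak{D}$, and $\sigma^r(\pi) = a^{r'}$ whenever $\mathrm{last}(\pi) = (s,r')$ (the latter being the only enabled, hence forced, choice). Unfolding Definition~\ref{def:induced_dtmc}, the induced chain moves from $s_0^\mathfrak{D}$ to $(s_0,r)$ with probability one, and from any path terminating at $(s,r)$ extends to the path at $(s',r)$ with probability $\mathfrak{P}(r)(s)(s') = \mathbf{P}_{D_r}(s)(s')$. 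Since $\sigma^r$ is memoryless, all finite paths ending in the same MDP state share identical successor distributions, so the quotient under $\pi \mapsto \mathrm{last}(\pi)$ yields an MC isomorphic to $D_r$ after identifying $(s_0,r)$ with the initial state $s_0$ of $D_r$, which is precisely the convention announced in the footnote.

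For $(\subseteq)$, any memoryless deterministic scheduler on $M^\mathfrak{D}$ must pick some action enabled at $s_0^\mathfrak{D}$, and by construction of $\Act^\mathfrak{D}$ this choice has the form $a^r$ for a unique $r \in \mathcal{R}^\mathfrak{D}$. At every reachable state $(s,r')$ only $a^{r'}$ is enabled, so on the reachable part the scheduler coincides with the $\sigma^r$ constructed above, and the first direction provides $M^\mathfrak{D}_\sigma = D_r$.

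The only bookkeeping nuisance — and thus the \emph{main obstacle}, though a minor one — is reconciling the path-indexed state space used in Definition~\ref{def:induced_dtmc} with the plain state space of $D_r$; memorylessness of $\sigma^r$ turns this into a straightforward quotient, and the footnote settles the initial-state identification. Everything else is a direct verification against Definitions~\ref{def:dtmc}, \ref{def:scheduler}, \ref{def:induced_dtmc}, and \ref{def:all-in-one}.
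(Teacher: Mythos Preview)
Your argument is correct. The paper does not supply a proof for this corollary at all; it is stated as an immediate consequence of Definition~\ref{def:all-in-one} (the text simply says the property ``is formalised as follows'' and moves on). Your two-inclusion argument, hinging on the observation that only $a^r$ is enabled at any state $(s,r)$ so that the sole nondeterminism lives at $s_0^\mathfrak{D}$, is exactly the implicit reasoning behind the corollary, and your handling of the path-indexed state space via memorylessness and the footnote's initial-state convention is the right bookkeeping.
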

Consequently, the feasibility synthesis problem for $\varphi$ has the solution $r\in\mathcal{R}^\mathfrak{D}$ iff there exists a memoryless deterministic scheduler $\sigma^r$ such that $M^\mathfrak{D}_{\sigma^r} \vDash \varphi$.
\begin{mymethod}[All-in-one~\cite{DBLP:journals/fac/ChrszonDKB18}]\label{approach:allinone}
Model checking the all-in-one MDP determines max or min probability (or expected reward) for all states, and thereby for all realisations, and thus provides a solution to both synthesis problems.
\end{mymethod}
As also the all-in-one MDP may be too large for realistic problems, we merely use it as formal starting point for our abstraction-refinement loop.

\subsection{Abstraction}\label{sec:abstraction}

First, we define a predicate abstraction that at each state of the MDP \emph{forgets} in which realisation we are, i.e., abstracts the second component of a state $(s, r)$. 
\begin{mydef}[Forgetting]\label{def:forgetting}
Let  $M^\mathfrak{D}=(S^\mathfrak{D},\init^\mathfrak{D},\Act^\mathfrak{D},\mathcal{P}^\mathfrak{D})$ be an all-in-one MDP. \emph{Forgetting} is an equivalence relation $\sim_f\ \subseteq S^\mathfrak{D} \times S^\mathfrak{D}$ satisfying  
 \[(s,r)\sim_f (s',r') \iff  s=s' \text{    and   }  s_0^{\mathfrak{D}} \sim_f (s_0^{\mathfrak{D}},r) \ \forall r\in  \mathcal{R}^\mathfrak{D}.\] 
Let $[s]_{\sim}$ denote the equivalence class wrt. $\sim_f$ containing~state~$s\in S^\mathfrak{D}$.
 
 Forgetting induces the \emph{quotient MDP} $M^\mathfrak{D}_\sim = ( \quotientstates,[\init^\mathfrak{D}]_\sim,\Act^\mathfrak{D},\mathcal{P}^\mathfrak{D}_\sim )$, where  
$\mathcal{P}^\mathfrak{D}_\sim([s]_\sim,a_r)([s']_\sim) = \mathfrak{P}(r)(s)(s')$.
\end{mydef}
At each state of the quotient MDP, the actions correspond to any realisation. 
It includes states that are unreachable in every realisation.

\begin{remark}[Action space]\label{rem:actions} 
According to Def.~\ref{def:forgetting}, for every state $[s]_\sim$  there are $|\mathfrak{D}|$ actions.
Many of these actions lead to the same distributions over successor states. 
In particular, two different realisations $r$ and $r'$ lead to the same distribution in $s$ if  $r(k) = r'(k)$ for all $k\in K$ where 
$\mathfrak{P}(s)(k) \neq 0$. 
To avoid this spurious blow-up of actions
, we \emph{a-priori} merge all actions yielding the same distribution.
\end{remark}
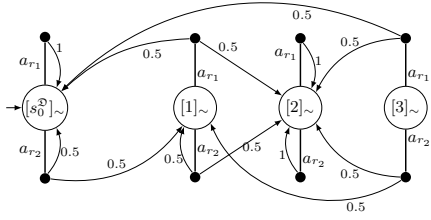
\begin{figure}[t]
	\centering
	\scalebox{0.7}{\begin{tikzpicture}[every node/.style={circle,inner sep=0.1pt}]
	\node[state] (s0) {$[s_0^\mathfrak{D}]_\sim$} ;
	
	\node[state,right=2 cm of s0] (1) {$[1]_\sim$};
	\node[state,right=4 cm of s0] (2) {$[2]_\sim$};
	\node[state,right=6 cm of s0] (3) {$[3]_\sim$};

	\node[above=0.8 cm of s0,fill, circle, inner sep=2pt,] (0ar1ar4) {};
	\node[below=0.8 cm of s0,fill, circle, inner sep=2pt,] (0ar2ar3) {};
	
	\draw[-] (s0) edge[thick] node[left] {$\act_{r_1}$} (0ar1ar4);
	\draw[-] (s0) edge[thick] node[left] {$\act_{r_2}$} (0ar2ar3);
	\draw[->] (0ar1ar4) edge[bend left] node[auto, near start] {\scriptsize{$1$}} (s0);
	\draw[->] (0ar2ar3) edge[bend right] node[right] {\scriptsize{$0.5$}} (s0);
	\draw[->] (0ar2ar3) edge[bend right] node[auto] {\scriptsize{$0.5$}} (1);
	
	\node[above=0.8 cm of 1,fill, circle, inner sep=2pt,] (1ar1) {};
	\node[below=0.8 cm of 1,fill, circle, inner sep=2pt,] (1ar2) {};
	
	\draw[-] (1) edge[thick, near start] node[right] {$\act_{r_1}$} (1ar1);
	\draw[-] (1) edge[thick, near start] node[auto, midway] {$\act_{r_2}$} (1ar2);%

	\draw[->] (1ar1) edge[bend right=20] node[below,near start] {\scriptsize{$0.5$}} (s0);
	\draw[->] (1ar1) edge[] node[auto, near start] {\scriptsize{$0.5$}} (2);
	\draw[->] (1ar2) edge[bend left] node[auto] {\scriptsize{$0.5$}} (1);
	\draw[->] (1ar2) edge[] node[right, midway] {\scriptsize{$0.5$}} (2);
%
	\node[above=0.8 cm of 2,fill, circle, inner sep=2pt,] (2ar1) {};
	\node[below=0.8 cm of 2,fill, circle, inner sep=2pt,] (2ar2) {};
	
	\draw[-] (2) edge[thick, near start] node[left, near end] {$\act_{r_1}$} (2ar1);
	\draw[-] (2) edge[thick, near start] node[right, near end] {$\act_{r_2}$} (2ar2);

	\draw[->] (2ar1) edge[bend left] node[auto] {\scriptsize{$1$}} (2);
	\draw[->] (2ar2) edge[bend left] node[auto] {\scriptsize{$1$}} (2);
%
	\node[above=0.8 cm of 3,fill, circle, inner sep=2pt,] (3ar1) {};
	\node[below=0.8 cm of 3,fill, circle, inner sep=2pt,] (3ar2) {};
	
	\draw[-] (3) edge[thick, near start] node[right] {$\act_{r_1}$} (3ar1);
	\draw[-] (3) edge[thick, near start] node[right] {$\act_{r_2}$} (3ar2);
	\draw[->] (3ar1) edge[bend right=35] node[near start,auto] {\scriptsize{$0.5$}} (s0);
	\draw[->] (3ar1) edge[bend right] node[above] {\scriptsize{$0.5$}} (2.north east);
	\draw[->] (3ar2) edge[bend left=50] node[near start,auto] {\scriptsize{$0.5$}} (1.south east);
	\draw[->] (3ar2) edge[bend left] node[below] {\scriptsize{$0.5$}} (2.south east);
%
	\draw ($(s0.west) + (-0.3,0)$) edge[->] (s0);
	\draw  [use as bounding box, draw=white] (-1,-2.1) rectangle (8.5,2.2) {};
	\end{tikzpicture}}\vspace{-0.5cm}
	\caption{The quotient MDP $M^\mathfrak{D}_\sim$ for realisations $r_1$ and $r_2$.}
	\label{fig:forgotten}
\end{figure}
The quotient MDP under forgetting involves that the available actions allow to switch realisations and thereby create induced MCs different from any MC in $\mathfrak{D}$. 
We formalise the notion of a consistent realisation with respect to parameters.
\begin{mydef}[Consistent realisation]
	For a family $\mathfrak{D}$ of MCs and $k\in K$, \emph{$k$-realisation-consistency} is an equivalence relation $\approx_k\ \subseteq \mathcal{R}^\mathfrak{D}{\times}\mathcal{R}^\mathfrak{D}$ satisfying:
\[
			r\approx_k r' \Longleftrightarrow r(k)=r'(k).
\]
	Let $[r]_{\approx_k}$ denote the equivalence class w.r.t. $\approx_k$ containing $r\in \mathcal{R}^\mathfrak{D}$.
\end{mydef}

\begin{mydef}[Consistent scheduler]\label{def:consistent-scheduler}
For quotient MDP $M^\mathfrak{D}_\sim$ after forgetting and $k\in K$, a scheduler $\sigma\in\Sigma^{M^\mathfrak{D}_\sim}$ is \emph{$k$-consistent} if for all 
$\pi,\pi'\in \pathsfin^{M^\mathfrak{D}_\sim}$: 
   		\[   			\sigma(\pi)=\act_r \land \sigma(\pi')=\act_{r'} \Longrightarrow r \approx_k r'\ .
   		\] 
A scheduler is  \emph{$K$-consistent}  (short: \emph{consistent}) if it is $k$-consistent for all $k\in K$.
\end{mydef}
\begin{lemma}
	For the quotient MDP $M^\mathfrak{D}_{\sim}$ of family $\mathfrak{D}$ of MCs:
	\[ \{ \left(M^\mathfrak{D}_{\sim}\right)_{\sigma^{r^*}} \mid \sigma^{r^*} \text{ consistent scheduler } \} = \{ D_r \mid r \in \mathcal{R}^\mathfrak{D} \}. \] 
\end{lemma}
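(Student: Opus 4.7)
The plan is to prove the set equality by two inclusions, resting on the observation that $K$-consistent schedulers on $M^\mathfrak{D}_\sim$ are in one-to-one correspondence with realisations in $\mathcal{R}^\mathfrak{D}$.

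For the inclusion $\supseteq$, given $r \in \mathcal{R}^\mathfrak{D}$ I would define the memoryless scheduler $\sigma^r$ that selects action $a_r$ at every equivalence class $[s]_\sim$ where $a_r$ is available. Since only the single label $r$ ever appears in its image, we have $r \approx_k r$ trivially for every $k \in K$, hence $\sigma^r$ is $K$-consistent. By Def.~\ref{def:forgetting}, the induced transitions satisfy $\mathcal{P}^\mathfrak{D}_\sim([s]_\sim, a_r)([s']_\sim) = \mathfrak{P}(r)(s)(s')$, which coincides with the transition matrix of $D_r$ after the usual identification of paths-as-states (Def.~\ref{def:induced_dtmc}) with their last components, as also used implicitly in Corollary~1.

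For the inclusion $\subseteq$, let $\sigma$ be a consistent scheduler. By Def.~\ref{def:consistent-scheduler}, whenever $\sigma(\pi) = a_r$ and $\sigma(\pi') = a_{r'}$ we have $r(k) = r'(k)$ for every $k \in K$, i.e., $r$ and $r'$ are the same function on $K$. Consequently there is a unique $r^* \in \mathcal{R}^\mathfrak{D}$ such that every action picked by $\sigma$ is $a_{r^*}$, making $\sigma$ effectively memoryless. The same transition-equality argument as above then shows that $(M^\mathfrak{D}_\sim)_\sigma$ equals $D_{r^*}$ under the same identification.

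The main subtlety I expect lies in Remark~\ref{rem:actions}: after a-priori merging of actions that yield identical distributions, a single merged action at $[s]_\sim$ may be labelled by several realisations that agree only on the parameters $k$ with $\mathfrak{P}(s)(k) \neq 0$. I would handle this by either keeping the unmerged action set throughout the proof (where each action carries an unambiguous label $r$) and passing to the merged quotient only afterwards, or by reformulating $k$-consistency directly on equivalence classes of merged actions. Either route preserves the essential point---that $K$-consistency fixes every parameter actually used along a transition chosen by $\sigma$, while irrelevant parameters may be set arbitrarily without affecting the induced MC---so that the induced MCs on the two sides of the equality coincide.
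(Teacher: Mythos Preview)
Your proof is correct and rests on the same core observation as the paper: $K$-consistency forces every action chosen by $\sigma$ to carry the same realisation label $r^*$, and conversely the memoryless scheduler picking $a_r$ everywhere is consistent and induces $D_r$. The paper's proof sketch differs in that it routes both inclusions through the all-in-one MDP $M^\mathfrak{D}$ rather than comparing directly against $\{D_r\}$. For $\supseteq$ it starts from a memoryless deterministic $\sigma^r \in \Sigma^{M^\mathfrak{D}}$ and transports it to the quotient via $\sigma^{r^*}([s]_\sim) = a_r$, establishing $M^\mathfrak{D}_{\sigma^r} = (M^\mathfrak{D}_\sim)_{\sigma^{r^*}}$; for $\subseteq$ it reads off $r$ from the consistent quotient scheduler and builds the all-in-one scheduler $\sigma^r(s_0^\mathfrak{D}) = a_r$, $\sigma^r((s,r')) = a^{r'}$, again obtaining equality of the induced MCs. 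Corollary~1 then closes the gap to $D_r$ on the all-in-one side. Your direct argument is a touch more elementary since it bypasses $M^\mathfrak{D}$ entirely, while the paper's detour has the expository benefit of making the correspondence between all-in-one and quotient schedulers explicit. Your discussion of the action-merging subtlety from Remark~\ref{rem:actions} is also appropriate and in fact more explicit than the paper's sketch, which silently works with the unmerged action set.
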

\begin{proof}[Idea]
For $\sigma^r \in \Sigma^{M^\mathfrak{D}}$, we construct $\sigma^{r^*} \in \Sigma^{M^\mathfrak{D}_\sim}$  such that  $\sigma^{r^*}([s]_\sim) = \act_r$ for all $s$. Clearly  $\sigma^{r^*}$ is consistent and  $M^\mathfrak{D}_{\sigma^r} = \left(M^\mathfrak{D}_{\sim}\right)_{\sigma^{r^*}}$ is obtained via a map between $(s,r)$ and $[s]_\sim$.
 For  $\sigma^{r^*} \in \Sigma^{M^\mathfrak{D}_\sim}$, we construct $\sigma^r \in \Sigma^{M^\mathfrak{D}}$ such that if $\sigma^{r^*}([s]_\sim) = \act_r$ then  $\sigma^{r}(\init^{\mathfrak{D}}) = \act_r$. For all other states, we define $\sigma^{r}((s,r')) = a^{r'}$ independently of $\sigma^{r^*}$. 
 Then $M^\mathfrak{D}_{\sigma^r} = \left(M^\mathfrak{D}_{\sim}\right)_{\sigma^{r^*}}$  is obtained as above.
\end{proof}
The following theorem is a direct corollary: we need to consider exactly the consistent schedulers.
\begin{mytheorem}
For all-in-one MDP $M^\mathfrak{D}$ and specification $\varphi$, there exists a memoryless deterministic scheduler $\sigma^r \in 
\Sigma^{M^\mathfrak{D}}$ such that  $M^\mathfrak{D}_{\sigma^r} \vDash \varphi$ iff there exists a consistent deterministic scheduler $\sigma^{r^*}\in\Sigma^{M^\mathfrak{D}_\sim}$ such that  $\left(M^\mathfrak{D}_{\sim}\right)_{\sigma^{r^*}} \vDash \varphi$.
\end{mytheorem}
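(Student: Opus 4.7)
The plan is to obtain the theorem essentially for free by chaining the preceding two equalities of induced-MC sets, since the specification $\varphi$ depends only on the induced MC and not on how it arose.

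First I would invoke the earlier corollary stating
\[
\{ M^\mathfrak{D}_{\sigma^r} \mid \sigma^r \text{ memoryless deterministic} \} \;=\; \{ D_r \mid r \in \mathcal{R}^\mathfrak{D} \},
\]
so that the left-hand side of the claimed equivalence is equivalent to the existence of some $r \in \mathcal{R}^\mathfrak{D}$ with $D_r \vDash \varphi$. Next I would apply the preceding lemma,
\[
\{ (M^\mathfrak{D}_\sim)_{\sigma^{r^*}} \mid \sigma^{r^*} \text{ consistent} \} \;=\; \{ D_r \mid r \in \mathcal{R}^\mathfrak{D} \},
\]
to rewrite the right-hand side in the same form. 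Composing these two equalities yields both implications of the theorem.

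For the "only if" direction, given a memoryless deterministic $\sigma^r \in \Sigma^{M^\mathfrak{D}}$ with $M^\mathfrak{D}_{\sigma^r} \vDash \varphi$, the corollary gives an $r \in \mathcal{R}^\mathfrak{D}$ with $D_r = M^\mathfrak{D}_{\sigma^r}$, hence $D_r \vDash \varphi$; the lemma then supplies a consistent deterministic $\sigma^{r^*}$ with $(M^\mathfrak{D}_\sim)_{\sigma^{r^*}} = D_r$, and determinism is preserved by the construction in the lemma's proof (which on each equivalence class $[s]_\sim$ picks a single action $\act_r$). The "if" direction is symmetric. Since $M^\mathfrak{D}_{\sigma^r} \vDash \varphi$ is by definition determined solely by the induced MC, the equivalence follows.

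The only subtle point I anticipate is making sure determinism transfers between the two constructions in the lemma's proof. The construction there maps a $\sigma^r$ on $M^\mathfrak{D}$ to $\sigma^{r^*}$ on $M^\mathfrak{D}_\sim$ by setting $\sigma^{r^*}([s]_\sim) = \act_r$ for all $s$, which is memoryless and deterministic; in the reverse direction, $\sigma^{r^*}([\init^\mathfrak{D}]_\sim) = \act_r$ fixes a concrete realisation $r$, and this is the only action that matters for the induced MC's behaviour. Hence restricting to deterministic schedulers on both sides causes no loss, and the theorem follows as a direct corollary.
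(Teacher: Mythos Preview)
Your proposal is correct and matches the paper's approach exactly: the paper simply states that the theorem ``is a direct corollary'' of the preceding lemma (combined with the earlier corollary on the all-in-one MDP), which is precisely the chaining of set equalities you describe. Your extra care about determinism transferring through the lemma's constructions is a welcome clarification that the paper leaves implicit.
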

\begin{example}
	Recall the all-in-one MDP $M^\mathfrak{D}$ from Ex.~\ref{ex:all-in-one}.
	The quotient MDP $M^\mathfrak{D}_\sim$ is depicted in Fig.~\ref{fig:forgotten}. 
	Only the transitions according to realisations $r_1$ and $r_2$ are included.
	Transitions from previously unreachable states, marked grey in Ex.~\ref{ex:all-in-one}, are now available due to the abstraction.
	The scheduler $\sigma\in\Sigma^{M^\mathfrak{D}_\sim}$ with $\sigma([\init^\mathfrak{D}]_\sim)=\act_{r_2}$ and $\sigma([1]_\sim)=\act_{r_1}$ is \emph{not $k_1$-consistent} 
	as different values are chosen for $k_1$ by $r_1$ and $r_2$.
	In the MC $M^\mathfrak{D}_{\sim\sigma}$ induced by $\sigma$ and $M^\mathfrak{D}_\sim$, the probability to reach state $[2]_\sim$ is one, while under realisation $r_1$, state $2$ is not reachable.
	\end{example}
\begin{mymethod}[Scheduler iteration]\label{approach:scheduler-iteration}
Enumerating all consistent schedulers for $M^\mathfrak{D}_\sim$ and analysing the induced MC provides a solution to both synthesis problems.
\end{mymethod}
However, optimising over exponentially many consistent schedulers solves the NP-complete feasibility synthesis problem, rendering such an iterative approach unlikely to be efficient.
Another natural approach is to employ solving techniques for NP-complete problems, like satisfiability modulo linear real arithmetic.
\begin{mymethod}[SMT]\label{approach:SMT}
A dedicated SMT-encoding (in~Sect.~\ref{sec:smt}) of the induced MCs of consistent schedulers from $M^\mathfrak{D}_\sim$ that solves the feasibility problem. 
\end{mymethod}

\subsection{Refinement Loop}\label{sec:refinement}
Although iterating over consistent schedulers (Approach~\ref{approach:scheduler-iteration}) is not feasible, model checking of $M^\mathfrak{D}_\sim$ still provides useful information for the analysis of the family $\mathfrak{D}$. 
Recall the feasibility synthesis problem for $\varphi=\mathbb{P}_{\leq \lambda} (\phi)$. 
If $\texttt{Prob}^{\max}(M^\mathfrak{D}_\sim,\phi) \leq \lambda$,  then all realisations of $\mathfrak{D}$ satisfy $\varphi$. 
On the other hand, 
 $\texttt{Prob}^{\min}(M^\mathfrak{D}_\sim,\phi) > \lambda$ implies that there is no realisation satisfying $\varphi$. 
If $\lambda$ lies between the $\min$ and $\max$ probability, and the scheduler inducing the $\min$ probability is not consistent, we cannot conclude anything yet, i.e., the abstraction is too coarse. 
A natural countermeasure is to refine the abstraction represented by $M^\mathfrak{D}_\sim$,  in particular, split the set of realisations leading to two 
 synthesis sub-problems.

\begin{mydef}[Splitting]\label{def:splitRel} 
Let $\mathfrak{D}$ be a family of MCs, and $\mathcal{R} \subseteq \mathcal{R}^{\mathfrak{D}}$ a set of realisations. For $k\in K$ and predicate $A_k$ over~$S$, \emph{splitting} partitions  $\mathcal{R}$ into
\[\mathcal{R}_\top=\{ r \in \mathcal{R}  \mid A_k(r(k))\} \quad  \text{and} \quad \mathcal{R}_\bot=\{ r \in \mathcal{R}  \mid \neg A_k(r(k))\}.\]
\end{mydef}
Splitting the set of realisations, and considering the subfamilies separately, rather than splitting states in the quotient MDP,
is crucial for the performance of the synthesis process as we avoid rebuilding the quotient MDP in each iteration. Instead, we only restrict the actions of the MDP to the particular subfamily.
\begin{mydef}[Restricting]\label{def:res} 
Let $M^\mathfrak{D}_\sim= ( \quotientstates,[\init^\mathfrak{D}]_\sim,\Act^\mathfrak{D},\mathcal{P}^\mathfrak{D}_\sim )$ be a quotient MDP and $\mathcal{R} \subseteq \mathcal{R}^{\mathfrak{D}}$ a set of realisations. The \emph{restriction} of $M^\mathfrak{D}_\sim$ wrt. $\mathcal{R}$ is the MDP $M^\mathfrak{D}_\sim[\mathcal{R}] = ( \quotientstates,[\init^\mathfrak{D}]_\sim,\Act^\mathfrak{D}[\mathcal{R}],\mathcal{P}^\mathfrak{D}_\sim)$ where $\Act^\mathfrak{D}[\mathcal{R}] = \{a_r \mid r \in  \mathcal{R}\}.$\footnote{Naturally,  $\mathcal{P}^\mathfrak{D}_\sim$ in  $M^\mathfrak{D}_\sim[\mathcal{R}]$ is restricted to $\Act^\mathfrak{D}[\mathcal{R}]$.}
 \end{mydef}
The splitting operation is the core of the proposed abstraction-refinement.
Due to space constraints, we do not consider feasibility separately. 

Algorithm~\ref{alg:th} illustrates the \emph{threshold synthesis} process. 
Recall that the goal is to decompose the set $\mathcal{R}^{\mathfrak{D}}$ into realisations satisfying and violating a given specification, respectively.
\begin{algorithm}[t]
	\caption{Threshold synthesis}\label{alg:th}
	\hspace*{\algorithmicindent} \textbf{Input:} A family $\mathfrak{D}$ of MCs with the set $\mathcal{R}^\mathfrak{D}$ of realisations, and specification $\mathbb{P}_{\leq \lambda} (\phi)$  \\
    \hspace*{\algorithmicindent} \textbf{Output}: A partition of $\mathcal{R}^\mathfrak{D}$ into subsets $T$ and $F$ according to Problem~\ref{prob:threshold}. 
	\begin{algorithmic}[1]
	\State $F \gets \emptyset, \ T \gets \emptyset,  \ U \gets \{\mathcal{R}^\mathfrak{D}\}$
		\State  $M^\mathfrak{D}_\sim \gets \texttt{buildQuotientMDP}(\mathfrak{D},\mathcal{R}^\mathfrak{D},\sim_f)$  \Comment{Applying Def.~\ref{def:all-in-one}~and~\ref{def:forgetting}}
		\While{$U  \neq  \emptyset$}
		\State \textbf{select} $\mathcal{R} \in U$ \textbf{and} $\mathcal{U} \gets \mathcal{U} \setminus \{\mathcal{R}\}$
		\State $M^\mathfrak{D}_\sim[\mathcal{R}] \gets \texttt{restrict}(M^\mathfrak{D}_\sim,\mathcal{R})$  \Comment{Applying Def.~\ref{def:res}}
		\State $(\max,\sigma_{\max}) \gets \texttt{solveMaxMDP}(M^\mathfrak{D}_\sim[\mathcal{R}],\phi)$
		\State $(\min,\sigma_{\min}) \gets \texttt{solveMinMDP}(M^\mathfrak{D}_\sim[\mathcal{R}],\phi)$
		\If{$\max < \lambda$} $T \gets T \cup \mathcal{R}$
		\EndIf 
		\If{$\min > \lambda$} $F \gets F \cup \mathcal{R}$ \
		\EndIf 
		\If{$\min \leq  \lambda \leq \max$} 
		\State $U \gets U \cup \texttt{split}(\mathcal{R},\texttt{selPredicate}(\max,\sigma_{\max},\min,\sigma_{\min}))$    \Comment{See Sect.~\ref{sec:splitting}}
		\EndIf 
		\EndWhile\\
	\Return $T$, $F$
	\end{algorithmic}
	\end{algorithm}
The algorithm uses a set $U$ to store subfamilies of  $\mathcal{R}^{\mathfrak{D}}$ that have not been yet classified as  satisfying or violating. It starts building the quotient
MDP with merged actions. That is, we never construct the all-in-one MDP, and  we merge actions as discussed in Rem.~\ref{rem:actions}. 
For every $\mathcal{R} \in U$, the algorithm restricts the set of realisations to obtain the corresponding subfamily.
For the restricted quotient MDP, the algorithm runs standard MDP model checking to compute the $\max$ and $\min$ probability and corresponding schedulers, respectively. 
Then, the algorithm either classifies $\mathcal{R}$ as satisfying/violating, or splits it based on a suitable predicate, and updates $U$ accordingly. 
We describe the splitting strategy in the next subsection. 
The algorithm terminates if $U$ is empty, i.e., all subfamilies have been classified. 
As only a finite number of subfamilies of realisations has to be evaluated, termination is guaranteed.

The refinement loop for max synthesis is very similar, cf.~Alg.~\ref{alg:max}.
Recall that now the goal is to find the realisation $r^*$ that maximises the satisfaction probability $\max^*$ of a path formula. 
The difference between the algorithms lies in the interpretation of the results of the underlying MDP model checking. 
If the $\max$ probability for $\mathcal{R}$ is below $\max^*$, $\mathcal{R}$ can be discarded. 
Otherwise, we check whether the corresponding scheduler $\sigma_{\max}$  is consistent. 
If consistent, the algorithm updates $r^*$ and $\max^*$, and discards $\mathcal{R}$. 
If the scheduler is not consistent but $\min > \max^{*}$ holds, 
we can still update $\max^*$ and improve the pruning process, as it means that some realisation (we do not know which) in $\mathcal{R}$ induces a higher probability than 
$\max^*$. 
Regardless whether $\max^*$ has been updated, the algorithm has to split $\mathcal{R}$ based on some predicate, and analyse its subfamilies as they may include the maximising realisation.

\algdef{SE}[DOWHILE]{Do}{doWhile}{\algorithmicdo}[1]{\algorithmicwhile\ #1}%

\begin{algorithm}[t]
	\caption{Max synthesis}\label{alg:max}
	\hspace*{\algorithmicindent} \textbf{Input:} A family $\mathfrak{D}$  of MCs with the set $\mathcal{R}^\mathfrak{D}$ of realisations, and a path formula $\phi$  \\
    \hspace*{\algorithmicindent} \textbf{Output}: A realisation $r^* \in \mathcal{R}^\mathfrak{D}$  according to Problem~\ref{prob:max}. 
	\begin{algorithmic}[1]
		\State $\max^* \gets -\infty, \ U  \gets \{\mathcal{R}^\mathfrak{D}\}$
	\State  $M^\mathfrak{D}_\sim \gets \texttt{buildQuotientMDP}(\mathfrak{D},\mathcal{R}^\mathfrak{D},\sim_f)$  \Comment{Applying Def.~\ref{def:all-in-one}~and~\ref{def:forgetting}}
		\While{$U  \neq  \emptyset$}
\State \textbf{select} $\mathcal{R} \in U$ \textbf{and} $\mathcal{U} \gets \mathcal{U} \setminus \{\mathcal{R}\}$
		\State $M^\mathfrak{D}_\sim[\mathcal{R}] \gets \texttt{restrict}(M^\mathfrak{D}_\sim,\mathcal{R})$ \Comment{Applying Def.~\ref{def:res}}
		\State $(\max,\sigma_{\max}) \gets \texttt{solveMaxMDP}(M^\mathfrak{D}_\sim[\mathcal{R}],\phi)$
		\State $(\min,\sigma_{\min}) \gets \texttt{solveMinMDP}(M^\mathfrak{D}_\sim[\mathcal{R}],\phi)$
		\If{$\max > \max^*$} 
		\If{$\texttt{isConsistent}(\sigma_{\max})}$ $r^* \gets q_{\max}, \max^* \gets \max$ 
		\Else 
		\If{$\min > \max^*$}  $\max^* \gets \min$ 
		\EndIf
		\State $U \gets U \cup \texttt{split}(\mathcal{R},\texttt{selPredicate}(\max,\sigma_{\max},\min,\sigma_{\min}))$   \Comment{See Sect.~\ref{sec:splitting}}
		\EndIf
		\EndIf 
		\EndWhile \\
		\Return $r^{*}$
	\end{algorithmic}
\end{algorithm}

\subsection{Splitting strategies}\label{sec:splitting}
If verifying the quotient MDP $M^\mathfrak{D}_\sim[\mathcal{R}]$ cannot classify the (sub-)realisation $\mathcal{R}$ as satisfying or violating, we split $\mathcal{R}$, while we guide the splitting strategy by using the obtained verification results.
The splitting operation chooses a suitable parameter $k \in K$ and predicate $A_k$ that partition the realisations $\mathcal{R}$ into $\mathcal{R}_{\top}$ and $\mathcal{R}_{\bot}$ (see Def.~\ref{def:splitRel}).  
A good splitting strategy globally reduces the number of model-checking calls required to classify all $r\in \mathcal{R}$. 

The two key aspects to locally determine a good $k$ are: 1)~the \emph{variance}, that is, how the splitting may narrow the difference between $\max = \texttt{Prob}^{\max}(M^\mathfrak{D}_\sim[\mathcal{X}],\phi)$ and $\min = \texttt{Prob}^{\min}(M^\mathfrak{D}_\sim[\mathcal{X}],\phi)$ for both $\mathcal{X}  = \mathcal{R}_{\top}$ or $\mathcal{X}  = \mathcal{R}_{\bot}$, and 2)~the \emph{consistency}, that is, how the splitting may reduce the inconsistency of the schedulers  $\sigma_{\max}$~and~$\sigma_{\min}$. 
These aspects cannot be evaluated precisely without applying all the split operations and solving the new MDPs $M^\mathfrak{D}_\sim[\mathcal{R}_{\bot}]$ and $M^\mathfrak{D}_\sim[\mathcal{R}_{\top}]$. 
Therefore, we propose an efficient strategy that selects $k$ and $A_k$ based on a light-weighted analysis of the model-checking results for~$M^\mathfrak{D}_\sim[\mathcal{R}]$. 
The strategy applies two \emph{scores} $\texttt{variance}(k)$ and $\texttt{consistency}(k)$ that estimate  the influence of $k$ on the two key aspects.
For any $k$, the scores are accumulated over all \emph{important states} $s$ (reachable via $\sigma_{\max}$ or $\sigma_{\min}$, respectively) where $\mathfrak{P}(s)(k) \neq 0$.
A state $s$ is important for $\mathcal{R}$ and some $\delta \in \mathbb{R}_{\geq 0}$ if 
 \[ \frac{\texttt{Prob}^{\max}(M^\mathfrak{D}_\sim[\mathcal{R}],\phi)(s)  - \texttt{Prob}^{\min}(M^\mathfrak{D}_\sim[\mathcal{R}],\phi)(s)}{\texttt{Prob}^{\max}(M^\mathfrak{D}_\sim[\mathcal{R}],\phi)  - \texttt{Prob}^{\min}(M^\mathfrak{D}_\sim[\mathcal{R}],\phi)} \geq \delta \]
where $\texttt{Prob}^{\min}(.)(s)$ and $\texttt{Prob}^{\max}(.)(s)$ is the $\min$ and $\max$ probability in the MDP with  initial state $s$.
To reduce the overhead of computing the scores, we simplify the scheduler representation. In particular, for $\sigma_{\max}$ and every $k\in K$, we extract a map $C_{\max}^k\colon T_k \rightarrow \mathbb{N}$ , where $C_{\max}^k(t)$ is the number of important states for which $\sigma_{\max}(s) = a_r$ with $r(k) = t$.
The mapping~$C_{\min}^k$ represents~$\sigma_{\min}$.
 
We define $\texttt{variance}(k) = \sum_{t\in T_k} |C_{\max}^k(t) - C_{\min}^k(t)|$, leading to high scores if the two schedulers vary a lot. Further, we define  $ \texttt{consistency}(k) =  \texttt{size}\left(C_{\max}^k\right) \cdot \texttt{max}\left(C_{\max}^k\right) + \texttt{size}\left(C_{\min}^k\right) \cdot \texttt{max}\left(C_{\min}^k\right)$, where $\texttt{size}\left(C\right) = |\{t\in T_k \mid C(t) > 0\}|-1 $ and $\texttt{max}\left(C\right) = \max_{t\in T_k}\{ C(t)\}$, leading to high scores if the parameter has clear favourites   for $\sigma_{\max}$ and $\sigma_{\min}$, but values from its full range are chosen.

As indicated, we consider different strategies for the two synthesis problems. 
For threshold synthesis, we favour the impact on the variance as we principally do not need consistent schedulers. 
For the max synthesis, we favour the impact on the consistency, as we need a consistent scheduler inducing the $\max$ probability.

Predicate~$A_k$ is based on reducing the variance: 
The strategy selects  $T' \subset T_k$ with $|T'| = \frac{1}{2}\left \lceil{|T_k|}\right \rceil$, containing those~$t$ for which $C_{\max}^k(t) - C_{\min}^k(t)$ is the largest. The goal is to get a set of realisations that induce a large probability (the ones including $T'$ for parameter $k$) and the complement inducing a small~probability.

\begin{mymethod}[MDP-based abstraction refinement]\label{approach:absref}
		The methods underlying Algorithms~\ref{alg:th} and~\ref{alg:max}, together with the splitting strategies, provide solutions to the synthesis problems and are referred to as \emph{MDP abstraction} methods.
\end{mymethod}


\section{Experiments}


\label{sec:implementation}
We implemented the proposed synthesis methods as a Python prototype using Storm~\cite{DBLP:conf/cav/DehnertJK017}. In particular, we use the Storm Python API for model-adaption, -building, and -checking as well as for scheduler extraction. 
For SMT solving, we use Z3~\cite{DBLP:conf/tacas/MouraB08} via  pySMT~\cite{pysmt2015}.
The tool-chain takes a PRISM~\cite{KNP11} or JANI~\cite{DBLP:conf/tacas/BuddeDHHJT17} model with open integer constants, together with a set of expressions with possible values for these constants. 
The model may include the parallel composition of several modules/automata.
The open constants may occur in guards\footnote{slight care by the user is necessary to avoid deadlocks.}, probability definitions, and updates of the commands/edges.
Via adequate annotations, we identify the  parameter values that yield a particular action. The annotations are key to interpret the schedulers, and to restrict the quotient without rebuilding.
 
All experiments were executed on a Macbook~MF839LL/A with 8GB RAM memory limit and a 12h time out. All algorithms can significantly benefit from coarse-grained parallelisation, which we therefore do not consider here.

\subsection{Research questions and benchmarks} 
The  goal of the experimental evaluation is to answer the research question: \emph{How does the proposed MDP-based abstraction methods (Approaches 3--5) cope with the inherent complexity (i.e.~the NP-hardness) of the synthesis problems (cf.~Problems~1 and 2)?} To answer this question, we compare their performance  with Approaches 1 and 2~\cite{DBLP:journals/fac/ChrszonDKB18}, representing state-of-the-art solutions and the base-line algorithms.
The experiments show that the performance of the MDP abstraction significantly varies for different case studies. Thus, we consider benchmarks from various application domains to \emph{identify the key characteristics of the synthesis problems affecting the performance of our approach.}

\paragraph{Benchmarks description.}
We consider the following case studies: \textsl{Maze} is a planning problem typically considered as POMDP, e.g. in~\cite{DBLP:journals/rts/Norman0Z17}. The family describes all MCs induced by small-memory~\cite{DBLP:conf/aaai/ChatterjeeCD16,DBLP:conf/uai/Junges0WQWK018} observation-based deterministic strategies (with a fixed upper bound on the memory). We are interested in the expected time to the goal. In~\cite{DBLP:conf/uai/Junges0WQWK018}, parameter synthesis was used to find randomised strategies, using~\cite{DBLP:conf/atva/CubuktepeJJKT18}. \textsl{Pole} considers balancing a pole in a noisy and unknown environment (motivated by \cite{DBLP:conf/qest/ArmingBCKS18,DBLP:conf/aaai/ChadesCMNSB12}). At deploy time, the controller has a prior over a finite set of environment behaviours, and should optimise the expected behavior without depending on the actual (hidden) environment. The family describes schedulers that do not depend on the hidden information. We are interested in the expected time until failure.
\textsl{Herman} is an asynchronous encoding of the distributed Herman protocol for self-stabilising rings~\cite{DBLP:journals/ipl/Herman90,KNP12a}. The protocol is extended with a bit of memory for each station in the ring, and the choice to flip various unfair coins. Nodes in the ring are anonymous, they all behave equivalently (but may change their local memory based on local events).
The family describes variations of memory-updates and coin-selection, but preserves anonymity.
We are interested in the expected time until stabilisation.
\textsl{DPM} considers a partial information scheduler for a disk power manager motivated by~\cite{BBPM99,DBLP:journals/ase/GerasimouCT18}. We are interested in the expected energy consumption.
\textsl{BSN} (Body sensor network, \cite{RodriguesANLCSS15}) describes a network of connected sensors that identify health-critical situations. We are interested in the reliability.
The family contains various configurations of the used sensors. \textsl{BSN} is the largest software product line benchmark used in \cite{DBLP:journals/fac/ChrszonDKB18}. We drop some implications between features (parameters for us) as this is not yet supported by our modelling language. We thereby extended the family.



\begin{table}[t]
\caption{Benchmarks and timings for Approach 1--3}
\label{tab:benchmarklist}
\scriptsize
\begin{tabular}{llrr|rr|rrr|rr|r}
        &       &         &         & \multicolumn{2}{c|}{Member size} & \multicolumn{3}{c|}{Quotient size} & \multicolumn{3}{c}{Run time}                                               \\
Bench. & Range & $|K|$ & $|\mathcal{D}|$       & Avg. $|S|$   & Avg. $|T|$   & $|S|$    & $|A|$    & $|T|$   & 1-by-1 & All-in-1 &  \begin{tabular}[c]{@{}l@{}}Sched.\\ Enum.\end{tabular}   \\ \hline
   
\textsl{Pole}   & [3.35,3.82]   & 17      & 1327104 &  5689            &  16896            &   6793       & 7897         & 22416        &  130k$^{*}$     &    MO     &  26k          \\
\textsl{Maze} &   [9.8,9800] &  20  & 1048576 & 134 & 211 & 203 & 277 & 409 & 28k\phantom{$^{*}$} & TO &  2.7k \\
\textsl{Herman} &  [1.86,2.44] &   9 & 576 & 5287 & 6948 & 21313 & 102657 & 184096 & 55\phantom{$^{*}$} & 72 & 246\\ 
\textsl{DPM} & [68,210]  & 9   & 32768 & 5572 & 18147 & 35154 & 66096 & 160146 & 2.9k\phantom{$^{*}$}  & MO  & 7.2k \\
\textsl{BSN} & [0,0.988]  & 10   & 1024 & 116 & 196 & 382 & 457 & 762 & 31\phantom{$^{*}$} & 2  & 2 \\ 
\end{tabular}
\vspace{-4mm}
\end{table}

Table~\ref{tab:benchmarklist} shows the relevant statistics for each benchmark: the benchmark name, the (approximate) range of the $\min$ and $\max$ probability/reward for the given family, 
the number of non-singleton parameters $|K|$, and the number of family members $|\mathfrak{D}|$. 
Then, for the family members the average number of states and transitions of the MCs, and the states, actions ($= \sum_{s \in S} |\Act(s)|$), and transitions of the quotient MDP. 
Finally, it lists in seconds the run time of the base-line algorithms  and the consistent scheduler enumeration\footnote{Values with a $^{*}$ are estimated by sampling a large fraction of the family.}.
The base-line algorithms employ the one-by-one and the all-in-one technique, using either a BDD or a sparse matrix representation. We report the best results.
 MOs indicate breaking the memory limit. Only the all-in-one approach required significant memory.
As expected, the SMT-based implementation provides an inferior performance and thus we do not report its~results.

 
\subsection{Results and discussion}
To simplify the presentation, we focus primarily on the threshold synthesis problem as it allows a compact presentation of the key aspects.
Below, we provide some remarks about the performance for the max and feasibility synthesis.

\paragraph{Results.}

\begin{table}[t]
\caption{Results for threshold synthesis via abstraction-refinement}
\label{tab:thresholdsynt}
\scriptsize
\begin{tabular}{ll|rrrrr|rrrrrrr}
Inst &   $\lambda$ & \#\! \!Below & \begin{tabular}[c]{@{}l@{}}\#\! Subf \\ \ \ below\end{tabular} & \#\! \!Above & \begin{tabular}[c]{@{}l@{}}\#\! Subf \\ \ \ above\end{tabular} & Singles & \#\! \!Iter & Time & Build & Check & Anal. & Speedup \\\hline
  \multirow{5}{*}{\textsl{Pole}}     &  3.37 &  697 & 176 & 1326407 & 2186 & 920 & 4723 & 308 & 117 & 60 & 118 & \textbf{421}\\
     
    &    3.73       &  1307077        &   7854          &   20027       &  3279          &   1294            &    22265     &  1.7k     &  576     &  317     &  396 & \textbf{77}        \\
     &            3.76          &  1322181        &   3140          &   4923       &  1025          &      1022         &  8329    & 584     &       187 &   114    &   197    &   \textbf{222} \\
           &      3.79          &  1326502        &   572          &   602       &  123          &      74         &  1389    & 58     &       23 &   10    &   23    &   \textbf{2.2k}\\\hline
     \multirow{5}{*}{\textsl{Maze}} &  10 & 4 & 3 & 1048572 & 92 & 4 & 189 & 5 & $<$1 & 3 & $<1$ & \textbf{26k} \\
       &  20 & 4247 & 2297 & 1044329 & 4637 & 3400 & 13867 & 114 & 21  & 43 & 29 & \textbf{246} \\
       &  30 & 18188 & 9934 & 1030388 & 18004 & 14010 & 55875 & 608 & 80 & 127 & 270 & \textbf{46} \\
              &  8000 & 1046285 & 846 & 2291 & 1125 & 969 & 3941 & 136 & 9 & 106 & 13 & \textbf{1.0k} \\
      
      \hline
   \multirow{2}{*}{\textsl{Herman}}  & 1.9 & 6 & 6 & 570 & 368 & 320 & 747 & 333 & 303 & 11 & 18 & \textbf{0.2} \\
    &   1.71 & 0 & 0 & 576 & 258 & 184 & 515 & 232 & 206 & 8 & 17 & \textbf{0.3} \\\hline
      \multirow{3}{*}{\textsl{DPM}} &   80 &160 & 141 & 32608 & 1292 &  356 & 2865 & 1.0k & 602 & 322 & 64 & \textbf{3} \\
      &  70 &6 & 6 & 32762 & 443 & 40  & 897 & 380 & 190 & 156 & 32 & \textbf{8} \\
            & 60 &0  & 0 & 32768 &  104 &  6 & 207 & 99 & 42 & 48 & 8 & \textbf{29} \\\hline
             \multirow{2}{*}{\textsl{BSN}} &   .965 & 544 & 81 & 480 & 81 &  25 & 321 & 2 & $<$1 & $<$1 & $<$1 & \textbf{1} \\
               &    .985 & 994 & 41 & 30 & 8 &  5 & 97 & $<$1 & $<$1 & $<$1 & $<$1 & \textbf{3} \\\hline
\end{tabular}
\end{table}

Table~\ref{tab:thresholdsynt} shows results for threshold synthesis.
The first two columns indicate the benchmark and the various thresholds. For each threshold $\lambda$, the table lists the number of family members below (above) $\lambda$, each with the number of subfamilies that together contain these instances, and the number of singleton subfamilies that were considered.
 The last table part gives the number of iterations of the loop in Alg.~\ref{alg:th}, and timing information (total, build/restrict times, model checking times, scheduler analysis times). 
The last column gives the speed-up over the best base-line (based on the estimates).

\paragraph{Key observations.}
The speed-ups drastically vary, which shows that the MDP abstraction often achieves a superior performance but may also lead to a performance degradation in some cases.
We identify four key factors.


\smallskip
\noindent\textbf{Iterations.} As typical for CEGAR approaches, the key characteristic of the benchmark that affects the performance is the number $N$ of iterations in the refinement loop. 
The abstract action introduces an overhead per iteration caused by performing two MDP verification calls and by the scheduler analysis. 
The run time for \textsl{BSN}, with a small $|\mathfrak{D}|$ is actually significantly affected by the initialisation of various data structures; thus only a small speedup is achieved.

\smallskip
\noindent\textbf{Abstraction size.} 
The size of the quotient, compared to the average size of the family members, is relevant too.
The quotient includes at least all reachable states of all family members, and may be significantly larger if an inconsistent scheduler reaches states which are unreachable under any consistent scheduler.
The existence of such states is a common artefact from encoding families in high-level languages.
Table~\ref{tab:benchmarklist}, however, indicates that we obtain a very compact representation  for \textsl{Maze} and  \textsl{Pole}.

\smallskip
\noindent\textbf{Thresholds.}
The most important aspect is the threshold $\lambda$. 
If  $\lambda$ is closer to the optima, the abstraction requires a smaller number of iterations, which directly improves the performance.
We emphasise that in various domains, thresholds that ask for close-to-optimal solutions are indeed of highest relevance as they typically represent the system designs developers are most interested in~\cite{skaf2010techniques}.
\emph{Why do thresholds affect the number of iterations?}
Consider a family with $T_k = \{0,1\}$ for each~$k$. Geometrically, the set $\mathcal{R}^\mathfrak{D}$ can be visualised as $|K|$-dimensional cube. 
The cube-vertices reflect family members. Assume for simplicity that one of these vertices is optimal with respect to the specification.
Especially in benchmarks where parameters are equally important, the induced probability of a vertex roughly corresponds to the Manhattan distance to the optimal vertex. 
Thus, vertices above the threshold induce a diagonal hyperplane, which our splitting method approximates with orthogonal splits. Splitting diagonally is not possible, as it would induce optimising over observation-based schedulers.
Consequently, we need more and more splits the more the diagonal goes through the middle of the cube. 
\emph{Even when splitting optimally, there is a combinatorial blow-up in the required splits when the threshold is further from the optimal values.}
Another effect is that thresholds far from optima are more affected by the over-approximation of the MDP model-checking results and thus yield more inconclusive answers.

\smallskip
\noindent\textbf{Refinement strategy.}
So far, we reasoned about optimal splits.
Due to the computational overhead,  our strategy cannot ensure optimal splits. Instead, the strategy depends mostly on information encoded in the computed MDP strategies. 
\emph{In models where the optimal parameter value heavily depends on the state, the obtained schedulers are highly inconsistent and carry only limited information for splitting.} Consequently, in such benchmarks we split sub-optimally. The sub-optimality has a major impact on the performance for \emph{Herman} as all obtained strategies are highly inconsistent -- they take a different coin for each node, which is good to speed up the stabilisation of the ring.

\paragraph{Summary.}
MDP abstraction is not a silver bullet. It has a lot of potential in threshold synthesis when the threshold is close to the optima.
Consequently, \emph{feasibility synthesis with unsatisfiable specifications is handled perfectly well by MDP abstraction}, while this is the worst-case for enumeration-based approaches.
Likewise, \emph{max synthesis} can be understood as threshold synthesis with a shifting threshold $\max^{*}$:
If the $\max^{*}$ is quickly set close to $\max$, MDP abstraction yields superior performance. Roughly, we can quickly approximate $\max^{*}$ when some of the parameter values are clearly beneficial for the specification.

\section{Conclusion and Future Work}

We contributed to the efficient analysis of families of Markov chains.
In particular, we discussed and implemented existing approaches to solve practically interesting synthesis problems, and devised a novel abstraction refinement scheme that 
mitigates the computational complexity of the synthesis problems, as shown by the empirical evaluation.
In the future, we will include refinement strategies based on counterexamples as in~\cite{DBLP:journals/scp/JansenWAZKBS14,DBLP:conf/atva/DehnertJWAK14}.
%
%
\bibliographystyle{splncs04}
\bibliography{literature}

\clearpage
\appendix
\section*{Appendix}
\section{SMT encoding of the  feasibility synthesis problem}\label{sec:smt}

Consider the quotient MDP $M^\mathfrak{D}_\sim = ( \quotientstates,[\init^\mathfrak{D}]_\sim,\Act^\mathfrak{D},\mathcal{P}^\mathfrak{D}_\sim )$, and specification $\varphi$.
W.l.o.g., $\varphi=\mathbb{E}_{\leq \kappa} (\lozenge G)$ where $G\subseteq \quotientstates$ and $\kappa\in \mathbb{R}_{\geq 0}$.
We define the typical set of states where there exists a scheduler inducing probability zero to reach $G$ as $\texttt{pZeroE}(\quotientstates)=\{s\in \quotientstates \mid \exists\sigma\in\Sigma^{M^\mathfrak{D}_\sim}\colon\texttt{Prob}(M^\mathfrak{D}_{\sim\sigma},\lozenge G)(s)=0)\}$, and the set of states where the probability to reach $G$ is one for all schedulers is 
$\texttt{pOneA}(\quotientstates)=\{s\in \quotientstates \mid \forall\sigma\in\Sigma^{M^\mathfrak{D}_\sim}\colon\texttt{Prob}(M^\mathfrak{D}_{\sim\sigma},\lozenge G)(s)=1)\}$.
Both sets are determined by a simple graph-analysis~\cite{BK08}.
Recall that the probability to reach $G$ under a certain scheduler needs to be one to have the expected reward well-defined.
We need to ensure that property for \emph{relevant states} $S_{\mathit{rel}}= \quotientstates\setminus\, \texttt{pOneA}(\quotientstates)$.
We refer to $S_{\mathit{crit}}=\texttt{pZeroE}(\quotientstates)$ as \emph{critical states} because we need to ensure that we choose a scheduler that does not induce zero probability.
The \emph{successor states} of a state $s$ w.r.t to an action $\act$ are defined as $\successors(s,\act)=\supp(\mathcal{P}^\mathfrak{D}_\sim(s,a))$.

For the encoding, we utilise the following variables:
\begin{compactitem}
	\item $e_s\in \mathbb{R}_{\geq 0}$ for all $s\in \quotientstates$; the expected reward to reach $G$ from state $s$.
	\item $\sigma_s^\act\in\mathbb{B}$ for all $s\in \quotientstates$ and $\act\in \Act^\mathfrak{D}$; $\sigma_s^\act=$\true iff action $\act$ is chosen the current scheduler at state $s$.
	\item $\probOneG_s\in\mathbb{B}$ for all $s\in \quotientstates$; $\probOneG_s=$\true iff the probability to reach $G$ is one under the current scheduler.
	\item $\probPosG_s\in\mathbb{B}$ for all $s\in \quotientstates$, $\probPosG_s = $\true iff the probability to reach $G$ is positive under the current scheduler.
	\item $o_s\in\mathbb{R}$ for all $s\in \quotientstates$; these variables are used to create a partial order on the states to ensure that the scheduler does not create a deadlock.
\end{compactitem}
\begin{align}
				    	&\quad e_{[\init^\mathfrak{D}]_\sim}\leq \kappa \land \probOneG_{s_0} \label{eq:threshold}\\
	\forall s\in G. &\quad r_s=0\label{eq:goal}\\
	\forall s\in \quotientstates.\,\forall\act\in\Act^\mathfrak{D}(s). &\quad \sigma_s^\act \rightarrow  \bigl(e_s \geq \mathit{rew}(s) + \sum_{s'\in \successors(s,a)} \mathcal{P}^\mathfrak{D}_{\sim}(s,\act)(s') \cdot e_{s'}\bigr)\label{eq:rewcomputation}\\
	\forall s\in \quotientstates. &\quad \bigoplus_{\act\in \Act^\mathfrak{D}(s)}\sigma_s^\act \land \Big(\bigwedge_{s'\in S^\mathfrak{D}} \sigma_s^{\act_r}\land\sigma_{s'}^{\act_{r'}} \rightarrow r \approx_{\mathcal{R}}^k r' \Big)\label{eq:k-consistent}\\
	 \forall s\in S_{\mathit{rel}}.\forall\act\in\Act^\mathfrak{D}(s). &\quad \sigma^\act_s \rightarrow \bigl(\probOneG_s \leftrightarrow \bigwedge_{s'\in\successors(s,\act)} \probOneG_{s'} \land  \probPosG_{s}\bigr) \label{eq:prob1}\\
	 \forall s\in S_{\mathit{crit}}.\forall\act\in\Act^\mathfrak{D}(s). &\quad \sigma^\act_s \rightarrow \bigl(\probPosG_s \leftrightarrow \bigvee_{s'\in\successors(s,\act)} \probPosG_{s'} \land  o_s<o_{s'}\bigr) \label{eq:probpos}
\end{align}

For parameter $k\in K$, the SMT encoding yields a $k$-consistent scheduler that induces a solution to the feasibility synthesis problem, if one exists.
\eqref{eq:threshold} ensures the expected reward to reach $G$ is at most $\kappa$, \eqref{eq:goal} sets the expected reward at goal states to zero.
\eqref{eq:rewcomputation} computes an upper bound on the expected reward for each state and action pair if 
the scheduler chooses $\act$ at $s$.
\eqref{eq:k-consistent} sets exactly one $\sigma_s^\act$ to true for each state, and ensures that the scheduler is $k$-consistent.
\eqref{eq:prob1} $\probOneG_{s}$ \true, indicating that $G$ is almost surely reached, 
 iff all its successors for action $\act$ are \true, and the probability to reach $G$ from $s$ is positive.
Analogously, \eqref{eq:probpos} is \true, indicating that $G$ is reached from $s$ with positive probability, iff there is a successor which also has positive probability to reach $G$.
The constraints on the $o_s$ variables ensure that a scheduler cannot generate a deadlock, as each successor $s'$ of state $s$ needs to have a higher value for $o_{s'}$.

\end{document}